\newcommand{\supcat}{\mathcal{S}} 
\newcommand{\LCP}[0]{\mathsf{LCP}}
\newcommand{\SUF}[0]{\mathsf{SA}}
\newcommand{\SF}[1]{S_{\SUF[#1]..n}}
\newcommand{\psv}[0]{\mathtt{PSV}}
\newcommand{\nsv}[0]{\mathtt{NSV}}
\newcommand{\rmq}[0]{\mathtt{RMQ}}
\newcommand{\lca}{\ensuremath{\textsc{Lca}}}
\newcommand{\sroot}{\ensuremath{\textsc{Root}}}
\newcommand{\ssize}{\ensuremath{\textsc{SubtreeSize}}}
\newcommand{\locate}{\ensuremath{\textsc{LeafLabel}}}
\newcommand{\parent}{\ensuremath{\textsc{Parent}}}
\newcommand{\ancestor}{\ensuremath{\textsc{IsAncestor}}}
\newcommand{\sdepth}{\ensuremath{\textsc{StringDepth}}}
\newcommand{\tdepth}{\ensuremath{\textsc{TreeDepth}}}
\newcommand{\scount}{\ensuremath{\textsc{LeafCount}}}
\newcommand{\fchild}{\ensuremath{\textsc{FirstChild}}}
\newcommand{\nsibling}{\ensuremath{\textsc{NextSibling}}}
\newcommand{\child}{\ensuremath{\textsc{Child}}}
\newcommand{\ithchild}{\ensuremath{\textsc{IthChild}}}
\newcommand{\childrank}{\ensuremath{\textsc{ChildRank}}}
\newcommand{\slink}{\ensuremath{\textsc{SuffixLink}}}
\newcommand{\nrs}{\ensuremath{\textsc{Nrs}}} 
\newcommand{\laq}{\ensuremath{\textsc{Laq}}}
\newcommand{\Null}{\ensuremath{\textsc{null}}}
\theoremstyle{plain}
\newtheorem{definition}{Definition}
\newtheorem{lemma}[definition]{Lemma}
\newcounter{theoremCounter}\newtheorem{theorem}[theoremCounter]{Theorem}
\theoremstyle{remark}
\DeclareMathOperator*{\polyloglog}{polylglg}
\DeclareMathOperator*{\argmin}{argmin}
\title{Combined Data Structure for Previous- and Next-Smaller-Values}
\author{
  Johannes Fischer\thanks{Computer Science Department, Karlsruhe
    University, \texttt{johannes.fischer@kit.edu}}
}
\date{}
\begin{document}
\maketitle

\begin{abstract}
Let $A$ be a static array storing $n$ elements from a totally ordered set. We present a data structure of optimal size at most $n\log_2(3+2\sqrt{2})+o(n)$ bits that allows us to answer the following queries on $A$ in constant time, without accessing $A$: (1) previous smaller value queries, where given an index $i$, we wish to find the first index to the left of $i$ where $A$ is strictly smaller than at $i$, and (2) next smaller value queries, which search to the right of $i$. As an additional bonus, our data structure also allows to answer a third kind of query: given indices $i<j$, find the position of the minimum in $A[i..j]$. Our data structure has direct consequences for the space-efficient storage of suffix trees.
\end{abstract}

\section{Introduction}
We consider the situation where a static array $A[1,n]$ can be preprocessed such that the following three queries can be answered in constant time: previous- and next-smaller-value-queries, where given a position $i$ in $A$, one searches for the next position $p$ to the left (or right) of $i$ with $A[p] < A[i]$, and range minimum queries, where for two given indices $i$ and $j$ we look for the position of the minimum element within the subarray $A[i..j]$.

Our work is situated in the field of succinct data structures, where the aim is to store objects of size $n$ from a universe of size $L(n)$ in $\lg L(n) + (1+o(1))$ bits\footnote{Throughout this article, $\lg$ denotes the binary logarithm.}, while still being able to perform all operations on the data as if they were uncompressed. All succinct data structures work in the word-RAM model of computation, where fundamental operations on a contiguous field of $w$ bits can be performed in constant time ($w$ is the word size, and we assume $\lg n=O(w)$).

Succinct data structures can be further classified into \emph{indexing} and \emph{encoding} data structures. An indexing data structure enhances an object (such as an array) with additional functionality (such as queries) and needs access to the object itself, whereas an encoding data structure recodes all necessary parts of the data for answering the queries without accessing the object.

For range minimum queries alone, there is a data structure in the encoding model of asymptotically optimal size $2n+o(n)$ bits that allows to answer queries in constant time \cite{fischer10optimal}. Previous- and next-smaller-value queries originate from parallel computing \cite{berkman93optimal}. For all three queries combined, the only existing data structure uses $3n+o(n)$ bits \cite{ohlebusch10cst++}.

In this short note, we present an encoding data structure of size at most $n\lg(3+2\sqrt{2})+o(n)\approx 2.54 n+o(n)$ bits that allows to answer all three queries in constant time. It is interesting to note that although we do not have a closed formula for the exact size of our data structure, we prove that it is asymptotically optimal. The reason for this slight oddity is that we are not aware of a closed formula for the size $L$ of the universe of objects that we encode; however, we prove that we encode them in an asymptotically optimal way.

Although our data structure is independent of the underlying array $A$ and may have other applications, our research is clearly motivated by the compact storage of full-text indices \cite{navarro07compressed}. Precisely, we show that our data structure automatically yields the smallest compressed suffix tree with constant-time navigation (we refer the reader to Sect.~\ref{sect:cst} for more details and preliminary work on compressed suffix trees).

The rest of this note is structured as follows. Sect.~\ref{sect:preliminaries} introduces some notation and known results. Sect.~\ref{sect:main} presents the core idea of the paper, a combined data structure for $\rm$s and $\psv$-/$\nsv$-queries. Finally, Sect.~\ref{sect:cst} describes how that data structure yields improvements in compressed suffix trees.

\section{Preliminaries}
\label{sect:preliminaries}
For integers $i$ and $j$, we write $[i,j]$ to denote the set $\{i,i+1,\dots,j\}$, and $(i,j)$ to denote $\{i+1,\dots,j-1\}$. For a rooted tree $\mathcal{T}$ and a node $v$, we write $\mathcal{T}_v$ to denote the subtree of $\mathcal{T}$ rooted at $v$.

\subsection{Queries}
Let $A[1,n]$ be an array of totally ordered objects. For technical reasons, we define $A[0]=-\infty=A[n+1]$ as the ``artificial'' overall minima of the array. We start by formally defining previous smaller values:

\begin{definition}
  \label{def:psv}
  For $1\le i \le n$, let $\psv_A(i)=\max\bigl\{j < i~:~A[j] < A[i]\bigr\}$ denote the \emph{previous smaller value} of position $i$.
\end{definition}

As mentioned in the introduction, we also consider next smaller values and range minima, for completeness formally defined as follows.

\begin{definition}
  \label{def:nsv}
  For $1\le i \le n$, let $\nsv_A(i)=\min\bigl\{j > i~:~A[j] < A[i]\bigr\}$ denote the \emph{next smaller value} of position $i$.
\end{definition}

\begin{definition}
  \label{def:rmq}
  For $1\le i\le j \le n$, let $\rmq_A(i)=\argmin\bigl\{A[k]~:~i\le k\le j\bigr\}$ denote a \emph{range minimum query} between positions $i$ and $j$. If the minimum in the query range is not unique, the leftmost (or rightmost) minimum is chosen as a representative.
\end{definition}

In the following, the subscript $A$ from $\rmq_A$ etc.\ will be omitted if the underlying array $A$ is clear from the context.

\subsection{LRM-Trees }
\label{sect:lrm}

LRM-Trees are the basis of our new data structure. They were introduced under this name as an internal tool for basic navigational operations in ordinal trees~\cite{sadakane10fully}, and, under the name of ``2d-Min Heaps,'' to encode integer arrays in order to support range minimum queries on them~\cite{fischer10optimal}.
 
\begin{definition}[Sadakane and Navarro~\cite{sadakane10fully}; Fischer~\cite{fischer10optimal}]
  \label{def:2dmin}
  The \emph{LRM-Tree} of $A$ is an ordered labeled tree with vertices $0, \dots, n$. For $1 \le i \le n$, $\psv(i)$ is the parent node of $i$. The children are ordered in increasing order from left to right.
\end{definition}

We note the following useful properties of the LRM-Tree (observe that we use nodes and array indices interchangeably throughout this article):

\begin{lemma}[Fischer~\cite{fischer10optimal}]
  \label{lemma:basic}
  Let $\mathcal{T}$ be the LRM-Tree of $A$.
  \begin{enumerate}
  \item The node labels correspond to the preorder-numbers of $\mathcal{T}$ (counting starts at 0).
  \item Let $i$ be a node in $\mathcal{T}$ with children $x_1,\dots,x_k$. Then $A[i] < A[x_j]$ for all $1 \le j \le k$.
  \item Again, let $i$ be a node in $\mathcal{T}$ with children $x_1,\dots,x_k$. Then $A[x_j] \le A[x_{j-1}]$ for all $1 < j \le k$.
  \end{enumerate}
\end{lemma}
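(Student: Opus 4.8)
Statements (2) and (3) are immediate from the definition of $\psv$, so I would dispose of them first. For (2): if $x_j$ is a child of $i$, then by definition $i=\psv(x_j)=\max\{j'<x_j : A[j']<A[x_j]\}$, so in particular $A[i]<A[x_j]$. For (3): the children $x_1,\dots,x_k$ of $i$ are listed in increasing order, so $x_{j-1}<x_j$, and both satisfy $\psv=i$; were $A[x_{j-1}]<A[x_j]$, then $x_{j-1}$ would belong to the set $\{j'<x_j : A[j']<A[x_j]\}$ and force $\psv(x_j)\ge x_{j-1}>i$, a contradiction. Hence $A[x_j]\le A[x_{j-1}]$.

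The real content is (1), and the plan is to first characterize the ancestor relation of $\mathcal{T}$ directly in terms of $A$, and then finish by an easy induction over subtrees. The characterization I would prove is: for $0\le i\le j\le n$, node $i$ is an ancestor of node $j$ (allowing $i=j$) if and only if $A[i]<A[k]$ for all $k$ with $i<k\le j$ (a vacuous condition when $i=j$). The ``only if'' direction follows by walking up the parent chain $j=v_0,\ v_1=\psv(v_0),\ v_2=\psv(v_1),\dots$ until $i$ is reached: the labels strictly decrease along this chain, hence so do the values $A[v_0]>A[v_1]>\cdots$, while for any $k$ lying strictly between two consecutive chain nodes, $v_{s+1}<k<v_s$, the maximality in $\psv(v_s)=v_{s+1}$ forces $A[k]\ge A[v_s]>A[i]$. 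The ``if'' direction goes by induction on $j-i$: for $i<j$, the parent $p:=\psv(j)$ satisfies $i\le p<j$ (since $i$ itself lies in the set defining $\psv(j)$), the value condition still holds for the pair $(i,p)$, so by induction $i$ is an ancestor of $p$, hence of $j$.

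Given the characterization, $\mathcal{T}_i=\{\,j : i\le j\le n,\ A[i]<A[k]\text{ for all }k\in(i,j]\,\}$, and this set is closed under moving $j$ towards $i$; since it also contains $i$, the subtree $\mathcal{T}_i$ occupies a contiguous block of labels $[i,m_i]$ with $m_i=\min\{k>i : A[k]\le A[i]\}-1$ (well defined and at most $n$ because of the sentinel $A[n+1]=-\infty$, and equal to $n$ for $i=0$). Now I finish (1) by induction on $|\mathcal{T}_i|$: if $i$ has children $x_1<\cdots<x_k$, then $\{i\}$ together with the subtrees $\mathcal{T}_{x_1},\dots,\mathcal{T}_{x_k}$ partitions $[i,m_i]$; since each $\mathcal{T}_{x_\ell}$ is an interval starting at $x_\ell$ and the $x_\ell$ increase, these intervals must tile $[i+1,m_i]$ consecutively, so $x_1=i+1$ and $x_\ell=m_{x_{\ell-1}}+1$. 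Hence a preorder traversal of $\mathcal{T}_i$ — which visits $i$, then the subtrees $\mathcal{T}_{x_\ell}$ left to right, each (by the inductive hypothesis) in increasing label order — enumerates exactly $i,i+1,\dots,m_i$ in that order. Applying this to the root $i=0$ shows that the preorder of $\mathcal{T}$ visits $0,1,\dots,n$ in order, i.e., the node labels are precisely the preorder numbers.

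The one step I expect to require care is stating and proving the ancestry characterization, specifically the treatment of ties in $A$: the right boundary of $\mathcal{T}_i$ is governed by the first later position with value $\le A[i]$ (not $\nsv(i)$, which uses a strict inequality), and it is exactly the strictness in the definition of $\psv$ that makes both directions of the characterization — and hence the consecutive-tiling argument — go through. Everything downstream of the characterization is routine bookkeeping.
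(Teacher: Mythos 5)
The paper states Lemma~\ref{lemma:basic} purely by citation to Fischer~\cite{fischer10optimal} and gives no proof of its own, so there is nothing in this document to compare your argument against. On its own merits your proof is correct and complete: parts (2) and (3) are indeed immediate from the definition of $\psv$ (for (3) the contradiction $\psv(x_j)\ge x_{j-1}>i$ is exactly right), and for part (1) the ancestor characterization ``$i$ is an ancestor of $j$ iff $A[i]<A[k]$ for all $i<k\le j$'' is the right intermediate lemma --- it yields that each subtree $\mathcal{T}_i$ is the contiguous label interval $[i,m_i]$, the child subtrees then tile $[i+1,m_i]$ consecutively, and the preorder claim follows by induction on subtree size. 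Your closing remark about ties is also well taken: the right endpoint $m_i$ is governed by the first later position with value $\le A[i]$ rather than by $\nsv(i)$, and it is precisely the strict inequality inside $\psv$ that makes both directions of the characterization and the tiling argument go through.
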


\subsection{Succinct Tree Encodings}
\label{sect:succinct_trees}
A rooted ordered tree on $n$ nodes can be encoded in $2n+o(n)$ bits in various ways such that it still permits (the simulation of) all navigational operations in constant time, such as BPS \cite{munro01succinct} or DFUDS \cite{benoit05representing}. Of particular importance to this article are methods based on tree covering (TC) \cite{geary06succinct,he07succinct,farzan08uniform}. They support most navigational operations on trees in constant time, among others $\sroot()$, $\parent(u)$, $\fchild(u)$, $\nsibling(u)$, $\ssize(u)$, selecting the $i$'th child ($\ithchild(u,i)$), computing the rank of a child among its siblings ($\childrank(u)$), and computing lowest common ancestors ($\lca(u,v)$). Farzan and Munro's approach \cite{farzan08uniform} has the further advantage that it can also optimally encode other types of trees, such as those described in the following section.

\subsection{Schr\"oder Trees}
\label{sect:schroeder}
The term \emph{Schr\"oder Tree} is used for various types of rooted ordered trees \cite{stanley99enumerative}: trees with no nodes of out-degree 1, trees with labeled edges, or trees with labeled nodes. For our purposes, we define them as follows.

\begin{definition}
  A \emph{Schr\"oder Tree} is a rooted ordered tree, where any node except the first child in a list of siblings may be colored red or blue. First children are always colored blue.
\end{definition}

The number of Schr\"oder Trees on $n$ nodes is counted by the little Schr\"oder numbers $\supcat_n$. Although we do not have a closed formula for $\supcat_n$, it is known \cite{merlini04waiting} that $\supcat_n=\frac{\rho^n}{\sqrt{\pi n}(2n-1)} (1+O(n^{-1}))$ with $\rho:=3+2\sqrt{2}$. In particular, $\supcat_n \le \rho^n$.

\section{Data Structure}
\label{sect:main}
In this section, we present the new data structure for answering $\rmq$/$\psv$/$\nsv$ on an input array $A$. We start by introducing the general ideas behind our data structure, and then show how this data structure can be encoded succinctly.

\subsection{Basic Solution}
\label{sect:basic}
The LRM-Tree (Def.~\ref{def:2dmin}) encodes all information for answering $\psv$-queries in a natural way, as it suffices to move to the parent node of $i$ for answering $\psv(i)$. It also captures all sufficient information for answering $\rmq$s:

\begin{lemma}[Fischer~\cite{fischer10optimal}]
  \label{lemma:rmqlca}
  For arbitrary nodes $i$ and $j$ in the LRM-Tree of $A$, $1 \le i < j \le n$, let $\ell=\lca(i,j)$. Then if $\ell=i$, $\rmq(i,j)$ is given by $i$, and otherwise, $\rmq(i,j)$ is given by the child of $\ell$ that is on the path from $\ell$ to $j$.
\end{lemma}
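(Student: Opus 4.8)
The plan is to derive the statement purely from Lemma~\ref{lemma:basic}: the node labels are preorder numbers, so every subtree $\mathcal{T}_v$ occupies a contiguous block of labels beginning with $v$; a node's value is strictly smaller than each of its children's values; and, along a list of siblings, the values are non-increasing from left to right.

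The first step is an easy auxiliary fact: if $u$ is a \emph{proper} ancestor of $w$, then $A[u] < A[w]$. Indeed, writing the tree path as $u = w_0, w_1, \dots, w_m = w$ with $w_t$ a child of $w_{t-1}$, the ``parent-smaller-than-children'' property gives $A[w_{t-1}] < A[w_t]$ for every $t$, and telescoping yields $A[u] < A[w]$. This settles the case $\ell = i$ at once: since $i < j$, node $i$ is then a proper ancestor of $j$, so $j \in \mathcal{T}_i$; as $\mathcal{T}_i$ is the contiguous label block $[i, i + |\mathcal{T}_i| - 1]$, we get $[i,j] \subseteq \mathcal{T}_i$, and every $k \in (i,j]$ is a proper descendant of $i$ with $A[k] > A[i]$. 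Hence $i$ is the \emph{unique} minimum of $A[i..j]$, so $\rmq(i,j) = i$ irrespective of the tie-breaking rule.

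For $\ell \ne i$, first note $\ell \ne j$ as well (an ancestor has a smaller label than its proper descendants, but $j > i$), so $\ell$ is a proper ancestor of both $i$ and $j$. Let $c$ be the child of $\ell$ on the tree path to $j$ and $c'$ the child of $\ell$ on the path to $i$ (possibly $c = j$ or $c' = i$). Since $\ell = \lca(i,j)$ we have $c \ne c'$; as $i \in \mathcal{T}_{c'}$, $j \in \mathcal{T}_c$ and $i < j$, the left-to-right order of $\ell$'s children forces $c'$ to precede $c$, whence $c' < c$ and $i < c \le j$, so $c$ is a legal position inside $[i,j]$. Next I would split $[i,j]$ according to the subtree hanging off $\ell$ that contains a given $k$: either $k \in \mathcal{T}_{c'}$, or $k \in \mathcal{T}_m$ for a child $m$ of $\ell$ with $c' < m < c$, or $k \in \mathcal{T}_c$. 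In the first two cases $c'$ (resp.\ $m$) is an ancestor of $k$ or equals $k$, so $A[k] \ge A[c']$ (resp.\ $\ge A[m]$), while the ``siblings-non-increasing'' property gives $A[c] \le A[m] \le A[c']$; in the third case $A[k] \ge A[c]$ directly by the auxiliary fact. Thus $A[c] = \min A[i..j]$. Finally, every $k \in (c,j]$ lies in $\mathcal{T}_c$ and is a proper descendant of $c$, so $A[k] > A[c]$: no occurrence of the minimum lies to the right of $c$, hence $c$ is the (rightmost) minimum and $\rmq(i,j) = c$.

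I expect the only real obstacle to be the tie-breaking in that last step. The ``siblings-non-increasing'' property is not strict, so several children of $\ell$ may share the value $A[c]$, making the minimum of $A[i..j]$ non-unique; the child on the path to $j$ is then the \emph{rightmost} such occurrence, so the identification $\rmq(i,j) = c$ is for the rightmost-minimum convention (the leftmost one being obtained symmetrically). Everything else is a direct, essentially bookkeeping-level consequence of Lemma~\ref{lemma:basic} and the preorder layout of the labels.
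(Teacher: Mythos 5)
The paper does not include its own proof of this lemma (it is cited from Fischer's earlier work), so I can only assess your argument on its merits.

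Your proof is correct and complete. The auxiliary ancestor--descendant inequality, the reduction to the three children $c'$, $m$ (intermediate), and $c$ of $\ell$, the use of the sibling monotonicity $A[c]\le A[m]\le A[c']$, and the preorder-interval argument that positions in $(c,j]$ are proper descendants of $c$ together establish that $c$ is the rightmost occurrence of the minimum in $A[i..j]$; the case $\ell=i$ is handled cleanly by the same ancestor inequality. One small inaccuracy worth flagging is the closing aside that under the \emph{leftmost}-minimum convention the answer would be ``obtained symmetrically,'' i.e.\ the child of $\ell$ on the path to $i$. That is not generally true: the leftmost position attaining the minimum is the leftmost sibling $m$ with $c'\le m\le c$, $m\ge i$, and $A[m]=A[c]$, which need not be $c'$ (e.g.\ when $i=c'$ but $A[c']>A[c]$). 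This does not affect the validity of your proof of the stated lemma, which is for the rightmost-minimum convention and is sound, but the aside should not be taken literally.
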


Thus, it remains to show how $\nsv$-queries can be answered. It is easy to see that the LRM-Tree alone is not enough for this task: consider $A=[0,0]$ and $A'=[1,0]$. These arrays have the same LRM-Tree (and hence the same answers to all $\rmq$s and $\psv$-queries); yet, their $\nsv$-queries differ, as $\nsv_A(1)=3$, and $\nsv_{A'}(1)=2$.

In principle, we could build another LRM-Tree $\mathcal{T}^\textrm{R}$ on the reversed sequence $A^\textrm{R}$ for answering $\nsv$=queries, as $\nsv_A(i)=n-\psv_{A^\textrm{R}}(n-i+1)+1$. As this would double the space of the resulting data structure, we now present a more sophisticated solution.

The general idea of our data structure can be seen as follows. Recall property 3 of Lemma~\ref{lemma:basic}: the children $x_1,\dots,x_k$ of a node $v$ in the LRM-Tree are ordered decreasingly from left to right: $A[x_1] \ge A[x_2] \ge \dots \ge A[x_k]$. Now suppose we wish to calculate $\nsv(x_i)$ for some $1\le i \le k-1$, and assume that $A[x_i] > A[x_{i+1}]$. Then $\nsv(x_i)=x_{i+1}$, as all $A$-values in the subtree $\mathcal{T}_{x_i}$ are strictly greater than at position $x_i$ (property 2 of Lemma~\ref{lemma:basic}). If, on the other hand, $A[x_i] = A[x_{i+1}]$, then the next ``candidate'' for $\nsv(x_i)$ is $x_{i+2}$ (assuming $i\le k-2$), as again all $A$-values in $\mathcal{T}_{x_{i+1}}$ are strictly greater than $A[x_{i+1}]=A[x_i]$.

This suggests the following general approach. In the LRM-Tree $\mathcal{T}$ of $A$, a node is colored \emph{red} if the corresponding value in $A$ is smaller than the $A$-value at its left sibling (if such a sibling exists). More formally, let $v$ be a node in $\mathcal{T}$ with children $x_1,\dots,x_k$. Then for all $2 \le i \le k$, node $x_i$ is colored red if and only if $A[x_i] < A[x_{i-1}]$. All other nodes (including the root) are colored blue. We call the resulting tree the \emph{Colored LRM-Tree}.

To get the connection to $\nsv$-queries, we need the following definition:
\begin{definition}
  \label{def:nrs}
  Let $\mathcal{T}^\textrm{C}$ the Colored LRM-Tree of $A[1,n]$, and let $v$ be a node in $\mathcal{T}^\textrm{C}$ with children $x_1,\dots,x_k$. The \emph{next red sibling} $\nrs(x_i)$ of a node $x_i$ is the leftmost sibling to the right of $x_i$ that is colored red. If such a sibling does not exist, we define $\nrs(x_i)=\perp$. In symbols, let $M = \{i<j\le k~:~x_j\text{~is colored red}\}$. Then $\nrs(x_i)=\perp$ if $M=\emptyset$, and otherwise $\nrs(x_i)=x_{\min M}$.
\end{definition}

We can then show the following lemma:

\begin{lemma}
  \label{lemma:nsv}
  Let $\mathcal{T}^\textrm{C}$ the Colored LRM-Tree of $A[1,n]$, and let $v$ be a node in $\mathcal{T}^\textrm{C}$ with children $x_1,\dots,x_k$, $x_1\le x_2 \le \dots \le x_k$. Then
  $$
  \nsv(x_i)=\begin{cases}
    \nrs(x_i)  & \text{if }\nrs(x_i)\ne\perp\\
    x_k + \ssize({x_k}) & \text{otherwise.}
  \end{cases}
  $$
\end{lemma}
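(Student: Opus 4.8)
The plan is to argue by a case analysis on whether $\nrs(x_i) \neq \perp$, mirroring exactly the informal discussion that precedes the lemma, but making the induction-like step on the chain of equal-valued siblings rigorous. Throughout I will use the three properties of Lemma~\ref{lemma:basic}: that each node's value is strictly below all values in its subtree, that siblings are non-increasing from left to right, and that the coloring rule makes $x_j$ red precisely when $A[x_j] < A[x_{j-1}]$ (equivalently, $x_j$ is blue for $j \geq 2$ iff $A[x_j] = A[x_{j-1}]$). I would also record at the outset the structural fact that the indices (preorder numbers) satisfy $x_i < x_{i+1}$ and that the ``block'' of positions in $[x_i, x_{i+1})$ is exactly $\mathcal{T}^\textrm{C}_{x_i}$, so the smallest value encountered when scanning right from $x_i$ past its own subtree is $A[x_{i+1}]$ (and similarly $x_k + \ssize(x_k)$ is the first position to the right of all of $v$'s children, namely the next position outside $\mathcal{T}^\textrm{C}_v$ that is not a descendant of $v$).

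\textbf{Case 1: $\nrs(x_i) = \perp$.} Here every sibling $x_{i+1}, \dots, x_k$ to the right of $x_i$ is blue, which by the coloring rule means $A[x_i] = A[x_{i+1}] = \dots = A[x_k]$. By property~2 of Lemma~\ref{lemma:basic} applied to $v$ and to each $x_j$, every value strictly inside $\mathcal{T}^\textrm{C}_{x_j}$ is $> A[x_j] = A[x_i]$, and also $A[v] < A[x_i]$ only matters for the left side. Scanning positions $x_i + 1, x_i + 2, \dots$: each such position up to $x_k + \ssize(x_k) - 1$ lies either in some subtree $\mathcal{T}^\textrm{C}_{x_j}$ ($j \ge i$) as a proper descendant, hence has value $> A[x_i]$, or is one of the roots $x_{i+1}, \dots, x_k$ themselves, hence has value $= A[x_i]$. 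So no position in $(x_i, x_k + \ssize(x_k))$ has value strictly less than $A[x_i]$. Finally I must check that position $p := x_k + \ssize(x_k)$ does satisfy $A[p] < A[x_i]$: if $p = n+1$ this holds by the convention $A[n+1] = -\infty$; otherwise $p$ is the first position outside $\mathcal{T}^\textrm{C}_v$ to the right, so $p$ is not a descendant of $v$, and since the LRM-Tree is built from $\psv$, $\psv(p)$ is a strict ancestor of $v$ (or $p$ itself equals such a node), forcing $A[p] \le A[v] < A[x_i]$; a clean way to see this is that $\psv(x_k) = v$ and $p \notin \mathcal{T}^\textrm{C}_{x_k}$ implies $\psv(p) \ne x_k$, and $p$ being the next preorder number after all of $\mathcal{T}^\textrm{C}_v$'s nodes that descend from $v$ gives $A[p] \le A[v]$. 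Hence $\nsv(x_i) = p = x_k + \ssize(x_k)$.

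\textbf{Case 2: $\nrs(x_i) = x_m \ne \perp$,} where $m = \min M$ is the leftmost red sibling strictly to the right of $x_i$. By minimality, $x_{i+1}, \dots, x_{m-1}$ are all blue (if any), so $A[x_i] = A[x_{i+1}] = \dots = A[x_{m-1}]$, and since $x_m$ is red, $A[x_m] < A[x_{m-1}] = A[x_i]$. Thus $A[x_m] < A[x_i]$, so $x_m$ is a genuine candidate for $\nsv(x_i)$. It remains to show no earlier position works: every position in $(x_i, x_m)$ is either one of $x_{i+1}, \dots, x_{m-1}$ (value $= A[x_i]$, not smaller) or a proper descendant of one of $x_i, x_{i+1}, \dots, x_{m-1}$ (value $> A[x_i]$ by property~2), using again that $[x_i, x_m)$ is partitioned into the subtrees $\mathcal{T}^\textrm{C}_{x_i}, \dots, \mathcal{T}^\textrm{C}_{x_{m-1}}$. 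So $x_m$ is the first position to the right of $x_i$ with a strictly smaller value, i.e. $\nsv(x_i) = x_m = \nrs(x_i)$.

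\textbf{Main obstacle.} The routine parts are the value comparisons; the one step needing care is the boundary argument in Case~1 that $A[x_k + \ssize(x_k)] < A[x_i]$ — i.e., correctly translating ``first preorder number outside the subtree'' into a statement about $\psv$ and hence about $A$-values, including the degenerate case where this position is $n+1$. I would isolate this as a small auxiliary observation (possibly stated once and reused): for any node $u$ in the LRM-Tree, the position $u + \ssize(u)$, if $\le n$, satisfies $\psv(u + \ssize(u))$ is a proper ancestor of $u$, hence $A[u + \ssize(u)] < A[\,\cdot\,] \le A[u]$ along that ancestor chain — actually the cleanest statement is simply $A[u + \ssize(u)] \le A[\,\psv\text{-ancestor}\,]$, and combined with $A[v] < A[x_i]$ this closes the case. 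Everything else follows directly from Lemma~\ref{lemma:basic} and the partition of a node's preorder interval into its children's subtrees.
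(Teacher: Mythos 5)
Your proof is correct and follows the same overall structure as the paper's (a case split on $\nrs(x_i)$), with your Case~2 essentially identical to the paper's first case. The one place you genuinely diverge is the boundary step in Case~1, proving $A[p] < A[x_i]$ for $p = x_k + \ssize(x_k)$: the paper argues by contradiction, assuming $A[p] \ge A[x_i]$ and splitting into the subcases $A[p] = A[x_i]$ (then $\psv(p) = v$, so $p$ would be an extra right sibling of $x_k$) and $A[p] > A[x_i]$ (then $\psv(p) \in [x_k, p-1]$, so $p$ would lie inside $\mathcal{T}^\textrm{C}_{x_k}$), each contradicting the definition of $x_k$ or the subtree size. You instead give a direct argument: $\mathcal{T}^\textrm{C}_v$ occupies the contiguous preorder block $[v, p-1]$, and since $p \notin \mathcal{T}^\textrm{C}_v$ its LRM-parent $\psv(p)$ cannot lie in $[v,p-1]$ either (otherwise $p$ would be a descendant of $v$), so $\psv(p) < v$; then $v \in (\psv(p), p)$ forces $A[v] \ge A[p]$ by definition of $\psv$, and $A[v] < A[x_i]$ (property~2) closes the case. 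Both approaches are valid; yours is arguably a little more direct and avoids the two-way contradiction, but the pivotal claim ``$\psv(p) < v$'' deserves to be stated as cleanly as I just did rather than via the two somewhat informal phrasings you offer (the remark about $\psv(p) \neq x_k$ in particular is true but does not by itself get you to $A[p] \le A[v]$).
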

\begin{proof}
  We consider each case in turn.
  \begin{description}[\setlabelstyle{\normalfont}]
  \item[$\nrs(x_i)\ne\perp$.]
    Let $j$ be defined by $x_j=\nrs(x_i)$. From Def.~\ref{def:nrs} and the fact that node $x_j$ is red, we know that $A[x_j]<A[x_i]$. Hence, we need to show that $A[h]\ge A[x_i]$ for all $h \in (x_i,x_j)$. From property 1 of Lemma~\ref{lemma:basic}, we know that all values in $(x_i,x_j)$ occur in $\mathcal{T}^\textrm{C}_{x_i}, \dots, \mathcal{T}^\textrm{C}_{x_j-1}$. Because $j$ is minimal and due to property 3 of Lemma~\ref{lemma:basic}, $A[h]=A[x_i]$ for $h=x_{i+1},\dots,x_{j-1}$. But due to property 2 of Lemma~\ref{lemma:basic}, $A[h]>A[x_i]$ for all $h\in[x_{\alpha}+1,x_{\alpha+1}-1]$ and all $i\le\alpha\le j-1$. Hence, $\nsv(x_i)=x_j$.
  \item[$\nrs(x_i)=\perp$.]
    Let $y = x_k + \ssize({x_k})$. As above, we can show that $A[h]\ge A[x_i]$ for all $x_i < h < y$. It thus remains to show that $A[y] < A[x_i]$. For the sake of contradiction, assume that $A[y] \ge A[x_i]$, where we further distinguish between the cases ``$=$'' and ``$>$.'' If $A[y] = A[x_i]$, then $\psv(y) = v$ (the parent node of $x_i$), so $y$ is the right sibling of $x_k$, a contradiction to the definition of $x_k$. If $A[y] > A[x_i]=A[x_k]$, then again due to property 2 of Lemma~\ref{lemma:basic}, we have $\psv(y)\in[x_k,y-1]$. So $\mathcal{T}^\textrm{C}_{x_k}$ contains $y$, a contradiction to the size of $\mathcal{T}^\textrm{C}_{x_k}$, which is $y-x_i$, as $\mathcal{T}^\textrm{C}_{x_k}$ contains exactly those elements from $[x_k,y)$.
  \end{description}
\end{proof}

\subsection{Succinct Encoding}
\label{sect:succinct}
We represent the Colored LRM-Tree $\mathcal{T}^\textrm{C}$ from Sect.~\ref{sect:basic} similar to Farzan and Munro's succinct TC-encoding for ordinal trees \cite{farzan08uniform}. This approach is based on a two-level decomposition of the tree into mini- and micro-trees. In our scenario, the encoding of a micro-tree is simply its index in an enumeration of all Schr\"oder Trees of the micro-tree size (called ``enumeration code'' in \cite{farzan08uniform}). In total, this uses optimal $\lg\supcat_n+o(n)$ bits of space.

It remains to show how we implement the query algorithms for $\rmq$, $\psv$, and $\nsv$.

As $\psv(i)=\parent(i)$ and the parent-operation is directly supported by TC, we can directly focus on $\nsv$. Recall Lemma \ref{lemma:nsv}: given $i$, we need to find $\nrs(i)$ in order to answer $\nsv(i)$. The $\nrs$-method can be implemented as the combination of \emph{modified} $\ithchild$- and $\childrank$-operations, as they are described by Farzan and Munro \cite{farzan08uniform} (see \cite[p.~23]{farzan09phd} for further details). In particular, given node $i$, we find the parent $p$ of $i$, and then determine the rank $r$ of $i$ among all its red siblings, from where we select the $r+1$'st red node. To this end, if $p$ is a root of a mini- or micro-tree, we use a \emph{modified} fully indexable dictionary (FID) \cite{raman02succinct} to rank/select among the red nodes. These FIDs are similar to the ones already stored at each mini- or micro-tree root, with the difference that they index only the red nodes. Similar to the original analysis, their overall space contributes only $o(n)$ bits to the final space. If, on the other hand, $p$ is not a mini- or micro-tree root, we use the lookup-tables stored along with the micro-trees to rank/select among the red nodes. These lookup-tables also use only $o(n)$ bits, as we use micro-trees of size $O(\log_\rho n/4)$. Finally, if $\nrs(i)=\perp$, we move to the rightmost sibling $j$ of $i$ and count the subtree size at $j$; both operations are supported in $O(1)$ time by TC.

For implementing $\rmq(i,j)$, we have to show how the operations in Lemma~\ref{lemma:rmqlca} can be performed in constant time. We cannot resort to the method described by Fischer \cite{fischer10optimal}, as it is inherently connected to DFUDS. We thus do the following: first compute $\ell=\lca(i,j)$; this is supported by TC \cite{geary06succinct,he07succinct}. Then if $\ell\ne i$ (otherwise we return $i$), compute the depth $d$ of $\ell$ (depth is supported by TC). Finally, compute the child of $\ell$ that is on the path to $j$ by a level-ancestor query $\laq(j,d+1)$ (supported by TC); this is the answer.

\begin{theorem}
  \label{thm:main1}
  For an array of $n$ totally ordered objects, there is a data structure using $\lg \supcat_n + o(n) \le n\lg(3+2\sqrt{2})+o(n) \approx 2.54n+o(n)$ bits of space that supports $\rmq$s, $\psv$- and $\nsv$-queries on $A$ in $O(1)$ time, without accessing $A$ at query time.
\end{theorem}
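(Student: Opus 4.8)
The theorem packages the development of Sections~\ref{sect:basic} and~\ref{sect:succinct}, so the plan is to verify three things: correctness of the query algorithms, the space bound, and the $O(1)$ query time.

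\textbf{Correctness.} $\psv(i)$ is by definition the parent of $i$ in the Colored LRM-Tree (Def.~\ref{def:2dmin}), so a single $\parent$-call answers it. For $\nsv$ I would invoke Lemma~\ref{lemma:nsv}: if $\nrs(i)\ne\perp$ return $\nrs(i)$, otherwise return $j+\ssize(j)$, where $j$ is the rightmost sibling of $i$. Here property~1 of Lemma~\ref{lemma:basic} is what guarantees that the subtree of $j$ occupies the contiguous block of preorder labels $[j,j+\ssize(j))$, so that $j+\ssize(j)$ is the correct position in $A$; at the root this equals $n+1$, so the convention $A[n+1]=-\infty$ makes the ``no smaller value to the right'' case come out right. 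For $\rmq(i,j)$ I would invoke Lemma~\ref{lemma:rmqlca}: compute $\ell=\lca(i,j)$; if $\ell=i$ return $i$, otherwise return the child of $\ell$ on the path to $j$, which is $\laq(j,\tdepth(\ell)+1)$.

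\textbf{Space.} The crucial observation is that the Colored LRM-Tree is a Schr\"oder Tree in the sense of Sect.~\ref{sect:schroeder}: the root and all first children are blue, every other node is red or blue. It therefore suffices to encode an arbitrary Schr\"oder Tree on the $n+1$ nodes $0,\dots,n$, and since $\lg\supcat_{n+1}=\lg\supcat_n+O(1)$ this is $\lg\supcat_n+o(n)$ bits provided the encoding is optimal. Optimality comes from running Farzan and Munro's tree-covering decomposition~\cite{farzan08uniform} with micro-trees of size $\Theta(\log_\rho n)$ stored as their enumeration codes among all Schr\"oder Trees of that size; the way Schr\"oder counts multiply over the micro-tree forest makes the leading term exactly $\lg\supcat_{n+1}$, while all of the structural overhead of the TC scheme --- pointers, the per-root FIDs, the extra red-only FIDs and the micro-tree lookup tables needed for the queries --- contributes only $o(n)$ bits. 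Finally, the bound $\supcat_n\le\rho^n$ from Sect.~\ref{sect:schroeder} turns this into $n\lg(3+2\sqrt{2})+o(n)\approx 2.54n+o(n)$.

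\textbf{Query time, and the main obstacle.} Each query is a constant number of TC primitives --- $\parent$, $\lca$, $\tdepth$, $\laq$, $\ssize$, and moving to the rightmost sibling (i.e.\ $\ithchild$ with the parent's degree) --- all $O(1)$ in Farzan--Munro's representation~\cite{farzan08uniform,geary06succinct,he07succinct}, together with $\nrs$, which is a color-restricted $\childrank$ followed by a color-restricted $\ithchild$: from $p=\parent(i)$, find the rank $r$ of $i$ among the red children of $p$ and select the $(r{+}1)$-st, returning $\perp$ if none exists, answered in $O(1)$ either by a FID indexing only the red nodes (when $p$ is a mini- or micro-tree root) or by a micro-tree lookup table otherwise. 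The real work of the proof --- and its main obstacle --- is precisely the adaptation of Farzan--Munro's machinery, designed for uncolored ordinal trees, to Schr\"oder Trees: one must check (i) that replacing micro-tree enumeration codes by codes over Schr\"oder Trees still yields the optimal $\lg\supcat_n$ leading term (this is the ``other types of trees'' flexibility of~\cite{farzan08uniform}, hinging on the decomposition being oblivious to node colors and on the relevant counting sequence factoring appropriately over the decomposition), and (ii) that the color-aware variants of $\childrank$/$\ithchild$ and the red-only FIDs slot into the scheme without breaking the $O(1)$-time or $o(n)$-space budgets; the rest is bookkeeping.
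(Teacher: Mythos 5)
Your proposal follows essentially the same approach as the paper: representing $\psv$/$\nsv$/$\rmq$ via the Colored LRM-Tree, observing that this tree is a Schr\"oder Tree, encoding it with Farzan--Munro's tree-covering scheme using Schr\"oder-Tree enumeration codes for the micro-trees, and implementing $\nrs$ by color-restricted $\childrank$/$\ithchild$ using red-only FIDs at mini-/micro-tree roots and lookup tables inside micro-trees. Your additional remarks --- that the tree actually has $n{+}1$ nodes so the space is $\lg\supcat_{n+1}=\lg\supcat_n+O(1)$, and that at the root the $\nrs=\perp$ case yields $n{+}1$ consistent with the sentinel $A[n{+}1]=-\infty$ --- are correct small refinements; the paper elides both.
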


\subsection{Optimality}
It is easy to see that the encoding from Sect.~\ref{sect:succinct} is optimal. Given any data structure $\mathscr{D}_A$ supporting $\psv$ and $\nsv$ on some underlying input array $A$, we can reconstruct the Colored LRM-Tree $\mathcal{T}^\textrm{C}$ of $A$, without knowing $A$: We first create $\mathcal{T}^\textrm{C}$'s rightmost path $n=x_1, x_2, \dots, x_k=0$ in a bottom-up manner, by successively querying $x_{i+1} = \psv(x_i)$, until arriving at $x_k=0$. All nodes are initially colored blue. This leaves us with unprocessed intervals $[x_i+1,x_{i+1}-1]$, which are handled recursively. During these recursive calls, suppose that a query $\psv(v)$ brings us to a node $u$ which is already present in the (partial) LRM-Tree $\mathcal{T}^\textrm{C}$. Let $w$ be the smallest child of $u$ greater than $v$ (i.e., the leftmost child of $u$ to the right of $v$). We then check if $\nsv(v)=w$, in which case we color $w$ red. Otherwise ($\nsv(v)> w$), $w$ remains blue, as in this case $A[v] = A[w]$. This procedure correctly reconstructs the Colored LRM-Tree $\mathcal{T}^\textrm{C}$ of $A$.

As every Schr\"oder Tree is also a Colored LRM-Tree for some array $A$ (starting at the root with children $x_1,\dots,x_k$, set $A[x_k]$ to 0, and $A[x_{i-1}]$ to $A[x_{i}]$ or $A[x_{i}]+1$, depending on whether $x_i$ is colored blue or red; the unprocessed intervals are handled recursively), we need at least $\lg \supcat_n$ bits to encode $\mathscr{D}_A$ in the worst case. This proves the optimality of the data structure from Thm.~\ref{thm:main1}.

\section{Application to Compressed Suffix Trees}
\label{sect:cst}
The result from Thm.~\ref{thm:main1} has direct consequences for compressed suffix trees (CSTs). A suffix tree (ST) for a string $S$ of length $n$ is a compact trie storing all the suffixes of $S$, in the sense that the characters on any root-to-leaf path spell out exactly a suffix. The ST is an extremely important data structure with applications in exact or approximate string matching, bioinformatics, and document retrieval, to mention only a few examples.

The drawback of STs is their huge space consumption of 20--40 times the text size ($O(n\lg n)$ bits in theory), even when using carefully engineered implementations. To reduce their size, in recent years several authors provided compressed variants of STs \cite{munro01space,grossi05compressed,sadakane07compressed,russo08fully,fischer09faster,ohlebusch09compressed,canovas10practical,ohlebusch10cst++,fischer10wee}.

We regard the CST as an abstract data type supporting the following operations (apart from the usual navigational operations on trees as those mentioned in Sect.~\ref{sect:succinct_trees}): $\scount(u)$ gives the number of leaves (suffixes) below $u$, $\locate(u)$ for a leaf $u$ yields the position in $S$ where the corresponding suffix begins, $\sdepth(u)$ gives $u$'s string-depth (number of characters on the root-to-$u$ path), $\slink(u)$ gives the unique node $v$ with root-to-$v$ label $\alpha \in \Sigma^\star$ if the root-to-$u$ label is $a\alpha$ for some $a \in \Sigma$, and $\child(u,a)$ gives the child $v$ of $u$ such that the label on the edge $(u,v)$ starts with $a\in\Sigma$. Here and in the following, $\Sigma$ denotes the underlying alphabet of size $\sigma$. See the first column of Tbl.~\ref{tbl:cst} for all operations (\emph{level ancestor queries} are excluded as we are not a aware of any actual algorithm that needs them in a suffix tree).

A CST on $S$ can be divided into three components: (1) the \emph{suffix array} $\SUF$, specifying the lexicographic order of $S$'s suffixes, defined by $\SF{1} < \SF{2} < \cdots < \SF{n}$ (hence $\SUF$ captures information on the \emph{leaves}); (2) the \emph{LCP-array} $\LCP$, storing the lengths of the \emph{longest common prefixes} of lexicographically adjacent suffixes: $\LCP[1] = -1$ and for $2\leq i \le n$, $\LCP[i] = \max\{k \ge 0~:~S_{\SUF[i]..\SUF[i]+k-1}=S_{\SUF[i-1]..\SUF[i-1]+k-1}\}$, which is the string-depth of the LCA of the lexicographically $i$'th and $i-1$'st suffix (hence $\LCP$ captures information on \emph{internal} nodes); and (3) additional data structures for simulating the \emph{navigational operations}. The goal of a CST is to compress each of these three components.

\begin{table}[t]
  \small
  \centering
  \caption{Comparison of different CSTs (space in bits on top of $\SUF$ and $\LCP$). The $O(\cdot)$ is omitted in all operations. Trees \cite{russo08fully,fischer09faster,ohlebusch09compressed} are incomparable to our approach, as they use less space in exchange for higher navigation times. $t_\psi$ denotes the time to compute the position of $\SUF[\cdot]+1$ in $\SUF$, which is $O(1)$ in most compressed suffix arrays.}
  \label{tbl:cst}
  \begin{tabular}{|l||c|c||c|c|c|c|}
    \hline
    & \textbf{\cite{russo08fully}} & \textbf{\cite{fischer09faster,canovas10practical}}& \textbf{\cite{munro01space,grossi05compressed,sadakane07compressed}} & \textbf{\cite{ohlebusch09compressed}} & \textbf{\cite{ohlebusch10cst++}} & \textbf{NEW}\\
    \hfill space & $o(n)$ & $o(n)$ & $4n$ & $2n$ & $3n$ & \boldmath$2.54n$ \\ \hline\hline
    \sroot & 1 & 1 & 1 & 1 & 1 & \textbf{1}\\\hline
    \ancestor & 1 & 1 & 1 & 1 & 1 & \textbf{1}\\\hline
    \ssize & --- & --- & 1 & --- & --- & \textbf{---}\\\hline
    \scount & 1 & 1 & 1 & 1 & 1 & \textbf{1}\\\hline
    \locate & $\lg^{1+\alpha}n$ & $t_\SUF$ & $t_\SUF$ & $t_\SUF$ & $t_\SUF$ & \boldmath$t_\SUF$\\\hline
    \sdepth & $\lg^{1+\alpha} n$ & $t_\LCP$ & $t_\LCP$ & $t_\LCP$ & $t_\LCP$ & \boldmath$t_\LCP$ \\\hline
    \parent & $\lg^{1+\alpha} n$ & $t_\LCP\polyloglog n$ & 1 & $t_\LCP\lg\sigma$ & 1 & \textbf{1} \\\hline
    \fchild & $\lg^{1+\alpha} n$ & $t_\LCP\polyloglog n$ & 1 & $t_\LCP$ & 1 & \textbf{1} \\\hline
    \nsibling& $\lg^{1+\alpha} n$ & $t_\LCP\polyloglog n$ & 1 & $t_\LCP$ & 1 & \textbf{1} \\\hline
    \slink& $\lg^{1+\alpha} n$ & $t_\psi + t_\LCP\polyloglog n$ & $t_\psi$ & $t_\psi + t_\LCP\lg\sigma$ & $t_\psi$ & \boldmath$t_\psi$\\\hline
    \lca & $\lg^{1+\alpha} n$ & $t_\LCP\polyloglog n$ & 1 & $t_\LCP\lg\sigma$ & 1 & \textbf{1}\\\hline
    \tdepth & $\lg^{2+2\alpha} n$ & --- & 1 & --- & --- & \textbf{---} \\\hline
    \child & $\lg\sigma+\lg^{1+\alpha} n$ & \scriptsize $t_\LCP\polyloglog n + t_\SUF \lg\sigma$ & $t_\SUF \lg\sigma$ & $t_\SUF \lg\sigma$ & $t_\SUF \lg\sigma$ & \boldmath$t_\SUF \lg\sigma$\\\hline
  \end{tabular}
\end{table}

We do not discuss here the different time/space tradeoffs for compressing $\SUF$ and $\LCP$; we just mention that both can be compressed into space proportional to the entropy of the underlying text, at the cost of increased access times, which we denote by $t_\SUF$ and $t_\LCP$, respectively.

Of more interest to us is the fact that most recent CSTs \cite{fischer09faster,ohlebusch09compressed,ohlebusch10cst++} represent a node $v$ as an interval $[v_l:v_r]$ in $\LCP$ and base their navigation on $\rmq$s and $\psv$-/$\nsv$-queries in $\LCP$. There are two basic strategies for supporting these queries: we can either use structures of size $o(n)$ \cite{fischer09faster,canovas10practical} or $2n+o(n)$ \cite{ohlebusch09compressed} bits and substitute ``missing information'' by a (sub-)logarithmic number of lookups to $\LCP$ (indexing model), resulting in increased navigation time (see 3rd and 5th column Tbl.~\ref{tbl:cst}). The other option \cite{ohlebusch10cst++} is to use a data structure that computes $\rmq$/$\psv$/$\nsv$ without needing access to the underlying LCP-array (encoding model).

Given these observations, the index from Thm.~\ref{thm:main1} almost directly yields a CST with $\approx 2.54n + o(n)$ bits on top of $\SUF$ and $\LCP$ with constant-time support of all operations that do not necessarily need access to $\SUF$ or $\LCP$. See again Tbl.~\ref{tbl:cst} for a comparison. In particular, we get the smallest CST with constant-time navigation. Note that it is of utmost theoretical and practical importance to have the smallest possible data structure for the navigational component of a CST, as its $O(n)$-term is incompressible, whereas the space of the other two components of a CST ($\SUF$ and $\LCP$) vanishes if the entropy of the underlying text does.

All suffix tree operations (apart from $\scount$, $\sdepth$, and $\child$) from Tbl.~\ref{tbl:cst} can be implemented solely by performing $\rmq$s and $\psv$-/$\nsv$-queries in $\LCP$, see \cite{fischer09faster,ohlebusch10cst++}. Only the implementation of the $\nsibling$-operation relies on structures that are proprietary to \cite{ohlebusch10cst++} (and the one in \cite{fischer09faster} accesses $\LCP$); we therefore give our own implementation as follows: let $v=[v_l:v_r]$ be the node whose next sibling we want to compute. First check if $v$ equals the root, and return $\Null$ in this case. Otherwise, compute $w=[w_l:w_r]=\parent(v)$. If $v_r=w_r$, return $\Null$, as $v$ does not have a right sibling in this case. We now know that $v_r+1$ is the leftmost index of $\nsibling(v)$. To determine the rightmost index, check if $\nsv(v_r+1)=w_r+1$, and return $[v_r+1,w_r]$ in this case, as then $v$ is the second-to-last child of $w$. Otherwise, return $[v_r+1,\rmq(v_r+2,w_r)-1]$, as the range minimum query returns a position in $\LCP$ where the string-depth of $w$ is stored.

\begin{theorem}
  \label{thm:main2}
  Let $S$ be a text of size $n$ with characters from an alphabet of size $\sigma$. Given $S$'s suffix array with access time $t_\SUF$ and its LCP-array with access time $t_\LCP$, there is a CST with additional $\lg \supcat_n + o(n)\le 2.54n+o(n)$ bits that supports the operations as indicated in the last column of Tbl.~\ref{tbl:cst}.\hfill \qed
\end{theorem}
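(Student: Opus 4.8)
The plan is to instantiate the data structure of Theorem~\ref{thm:main1} on the LCP-array $\LCP$ and to check that, combined with the by-now-standard interval representation of suffix-tree nodes used in \cite{fischer09faster,ohlebusch09compressed,ohlebusch10cst++}, it realizes every operation in the last column of Tbl.~\ref{tbl:cst} within the stated time and space. Recall that a node $v$ of the suffix tree is identified with the interval $[v_l:v_r]$ of leaves below it, equivalently with a maximal run in $\LCP$ whose minimum is $\sdepth(v)$. The first, easy step is to dispatch the operations that need neither $\SUF$ nor $\LCP$ nor even the new structure: $\sroot()=[1:n]$, $\ancestor(u,v)$ is interval inclusion, and $\scount(v)=v_r-v_l+1$, all in $O(1)$ time; the operations $\ssize$ and $\tdepth$ are out of scope (marked ``---'' in the table).

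The second and central step is to recall from \cite{fischer09faster,ohlebusch10cst++} that $\parent$, $\fchild$, $\lca$, and --- modulo the one $\psi$-step costed by $t_\psi$ --- $\slink$, each reduce to a constant number of $\rmq$-, $\psv$-, and $\nsv$-queries on $\LCP$; by Theorem~\ref{thm:main1} every such query takes $O(1)$ time and touches neither $\LCP$ nor $\SUF$, so these operations run in the times claimed using only the new structure. The sole operation whose published realization either accesses $\LCP$ \cite{fischer09faster} or depends on structures proprietary to \cite{ohlebusch10cst++} is $\nsibling$; for it I would use the self-contained routine given just above the theorem statement, which after handling the root and the ``no right sibling'' case via $\parent$, sets the left endpoint of $\nsibling(v)$ to $v_r+1$ and obtains the right endpoint from one test $\nsv(v_r+1)=w_r+1$ (detecting the second-to-last child) or else from $\rmq(v_r+2,w_r)-1$, which points to a copy of $\sdepth(w)$ in $\LCP$.

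The third step covers the operations that genuinely must consult $\SUF$ or $\LCP$: $\locate$ of a leaf is a single $\SUF$-lookup, so it costs $t_\SUF$; $\sdepth(v)=\LCP[\rmq(v_l+1,v_r)]$ for an internal node costs one $\rmq$ plus one $\LCP$-lookup, i.e.\ $t_\LCP$ (for a leaf it is $n-\SUF[v_l]+1$); and $\child(v,a)$ is a binary search over the at most $\sigma$ children of $v$, each comparison reading one character via a $\SUF$-lookup, for a total of $O(t_\SUF\lg\sigma)$. Finally, for the space accounting, Theorem~\ref{thm:main1} supplies $\lg\supcat_n+o(n)$ bits, and the modified FIDs and micro-tree lookup tables needed for the $\ithchild$/$\childrank$/$\nrs$ machinery of Sect.~\ref{sect:succinct} add only a further $o(n)$ bits, with nothing else stored on top of $\SUF$ and $\LCP$.

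The step that requires care rather than ingenuity is verifying that the interval-to-node dictionary is faithful under the particular encoding of Theorem~\ref{thm:main1}, which stores the Colored LRM-Tree of $\LCP$. Concretely, I must confirm that the tie-breaking convention of Def.~\ref{def:rmq} (leftmost vs.\ rightmost minimum) is the one under which the offsets $v_l+1$, $v_r+1$, $v_r+2$ used above genuinely land on positions storing the intended string-depths, and that the boundary conventions --- the root, leaves, nodes without a right sibling, the artificial entry $\LCP[1]=-1$, and the sentinels $A[0]=A[n+1]=-\infty$ underlying the LRM-Tree --- are consistent across all the formulas. Once this bookkeeping matches the one in \cite{fischer09faster,ohlebusch10cst++}, the theorem follows by assembling the per-operation bounds into the last column of Tbl.~\ref{tbl:cst}.
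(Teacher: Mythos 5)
Your proposal follows the paper's argument essentially verbatim: instantiate Theorem~\ref{thm:main1} on $\LCP$, reduce $\parent$, $\fchild$, $\lca$, $\slink$ to $\rmq$/$\psv$/$\nsv$ via \cite{fischer09faster,ohlebusch10cst++}, use the paper's own self-contained $\nsibling$ routine, charge $\locate$, $\sdepth$, $\child$ the cost of their unavoidable $\SUF$/$\LCP$ accesses, and note that the space is $\lg\supcat_n+o(n)$ bits. Your closing remark about verifying the $\rmq$ tie-breaking and boundary conventions is a reasonable extra bit of care, but it does not change the approach, which matches the paper's.
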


Our CST resides in between \cite{ohlebusch09compressed} and \cite{ohlebusch10cst++}: it is smaller than \cite{ohlebusch10cst++} and larger than \cite{ohlebusch09compressed}, but equally fast as the larger of these \cite{ohlebusch10cst++}.

It is interesting to note that our $\lg\supcat_n\approx 2.54n$ bits are also optimal for encoding the topology of a suffix tree, as it is a tree with exactly $n$ leaves and no nodes of out-degree 1; the number of such trees is also counted by the little Schr\"oder number $\supcat_n$. However, we cannot make an optimality claim for the CST from Thm.~\ref{thm:main2}, as it builds on $\SUF$ and $\LCP$, who already capture the topology of the suffix tree.

\bibliographystyle{abbrv}
\bibliography{paper}

\end{document}